\documentclass[11pt]{article}\textwidth 6.5in\textheight 9in
\usepackage{amssymb}\usepackage[colorlinks]{hyperref}\usepackage{color}
	\usepackage{stmaryrd}\usepackage{mathrsfs}
	\usepackage{graphicx}
	\usepackage{amsmath}
 \usepackage{amsthm}
\usepackage{caption}

\topmargin -3pc\oddsidemargin 0in\evensidemargin 0in\begin{document}
\setlength{\captionmargin}{27pt}
\newcommand\hreff[1]{\href {http://#1} {\small http://#1}}
\newcommand\trm[1]{{\bf\em #1}} \newcommand\emm[1]{{\ensuremath{#1}}}

\setcounter{tocdepth}{3} 

\newtheorem{thr}{Theorem} 
\newtheorem{lmm}{Lemma}
\newtheorem{cor}{Corollary}
\newtheorem{con}{Conjecture} 
\newtheorem{prp}{Proposition}

\newtheorem{blk}{Block}
\newtheorem{dff}{Definition}
\newtheorem{asm}{Assumption}
\newtheorem{rmk}{Remark}
\newtheorem{clm}{Claim}
\newtheorem{example}{Example}

\newcommand{\ab}{a\!b}
\newcommand{\yx}{y\!x}
\newcommand{\yux}{y\!\underline{x}}

\newcommand\floor[1]{{\lfloor#1\rfloor}}\newcommand\ceil[1]{{\lceil#1\rceil}}

\newcommand{\lea}{<^+}
\newcommand{\gea}{>^+}
\newcommand{\eqa}{=^+}

\newcommand{\lel}{<^{\log}}
\newcommand{\gel}{>^{\log}}
\newcommand{\eql}{=^{\log}}

\newcommand{\lem}{\stackrel{\ast}{<}}
\newcommand{\gem}{\stackrel{\ast}{>}}
\newcommand{\eqm}{\stackrel{\ast}{=}}

\newcommand\edf{{\,\stackrel{\mbox{\tiny def}}=\,}}
\newcommand\edl{{\,\stackrel{\mbox{\tiny def}}\leq\,}}
\newcommand\then{\Rightarrow}

\newcommand\km{{\mathbf {km}}}\renewcommand\t{{\mathbf {t}}}
\newcommand\KM{{\mathbf {KM}}}\newcommand\m{{\mathbf {m}}}
\newcommand\md{{\mathbf {m}_{\mathbf{d}}}}\newcommand\mT{{\mathbf {m}_{\mathbf{T}}}}
\newcommand\K{{\mathbf K}} \newcommand\I{{\mathbf I}}

\newcommand\II{\hat{\mathbf I}}
\newcommand\Kd{{\mathbf{Kd}}} \newcommand\KT{{\mathbf{KT}}} 
\renewcommand\d{{\mathbf d}} 
\newcommand\D{{\mathbf D}}

\newcommand\w{{\mathbf w}}
\newcommand\Ks{\Lambda} \newcommand\q{{\mathbf q}}
\newcommand\E{{\mathbf E}} \newcommand\St{{\mathbf S}}
\newcommand\M{{\mathbf M}}\newcommand\Q{{\mathbf Q}}
\newcommand\ch{{\mathcal H}} \renewcommand\l{\tau}
\newcommand\tb{{\mathbf t}} \renewcommand\L{{\mathbf L}}
\newcommand\bb{{\mathbf {bb}}}\newcommand\Km{{\mathbf {Km}}}
\renewcommand\q{{\mathbf q}}\newcommand\J{{\mathbf J}}
\newcommand\z{\mathbf{z}}

\newcommand\B{\mathbf{bb}}\newcommand\f{\mathbf{f}}
\newcommand\hd{\mathbf{0'}} \newcommand\T{{\mathbf T}}
\newcommand\R{\mathbb{R}}\renewcommand\Q{\mathbb{Q}}
\newcommand\N{\mathbb{N}}\newcommand\BT{\{0,1\}}
\newcommand\FS{\BT^*}\newcommand\IS{\BT^\infty}
\newcommand\FIS{\BT^{*\infty}}\newcommand\C{\mathcal{L}}
\renewcommand\S{\mathcal{C}}\newcommand\ST{\mathcal{S}}
\newcommand\UM{\nu_0}\newcommand\EN{\mathcal{W}}

\newcommand{\supp}{\mathrm{Supp}}

\newcommand\lenum{\lbrack\!\lbrack}
\newcommand\renum{\rbrack\!\rbrack}

\renewcommand\qed{\hfill\emm\square}

\title{\vspace*{-3pc} On the Existence of Anomalies, The Reals Case}

\author {Samuel Epstein\footnote{JP Theory Group. samepst@jptheorygroup.org}}

\maketitle
\begin{abstract}
The Independence Postulate (IP) is a finitary Church-Turing Thesis, saying mathematical sequences are independent from physical ones. Modelling observations  as infinite sequences of real numbers, IP implies the existence of anomalies. 
\end{abstract}
\section{Introduction}
An outlier is an observation that is set apart from a population. There are many reasons that such anomalies occur, including measurement error and human error. However recent results have shown that outliers are ingrained into the nature of algorithms and dynamics. In \cite{Epstein21}, anomalies were proven to occur in sampling algorithms.  In \cite{EpsteinDerandom22}, anomalies were proven to exist in the outputs of probabilistic algorithms. They were also proven to be emergent computable ergodic dynamics on the Cantor space. In \cite{EpsteinDynamics22} anomalies were shown to emergent in a more general (but still computable) class of dynamics. These results were extended into computable metric spaces in the paper \cite{EpsteinThermo23}, showing computable dynamics in such spaces produce outliers. Furthermore, oscillations in algorithmic thermodynamic entropy were proven.

But what about measurements of systems that are too complex to be considered algorithmic? One example is the global weather system. One can attest to the fact that there are many strange formations that occur! To show that anomalies occur, one can use the Independence Postulate \cite{Levin84,Levin13}. The Independence Postulate is a finitary Church-Turing thesis, postulating that certain finite and infinite sequences cannot be easily be found with a short ``physical address''. In \cite{EpsteinExistAn23}, the Independence Postulate was used to show that observations, a.k.a. infinite sequences of natural numbers, that do not have outliers have high physical addresses. In other words, observations with no outliers cannot be found in nature.

In this paper, we extend these results to observations modeled by infinite sequences of reals. This enables a more natural modelling of phenomena such as fluid dynamics, etc. This paper reproduces the proof of infinite sequences in \cite{Epstein21}, but without using left-total machines, which require a lengthy explaination.

\section{Conventions}

The function $\K(x|y)$ is the conditional prefix Kolmogorov complexity. The mutual information between two strings $x,y\in\FS$, is $\I(x:y)=\K(x)+\K(y)-\K(x,y)$. For probability $p$ over $\N$, randomness deficiency is $\d(a|p,b)=\floor{-\log p(a)}-\K(a|\langle p\rangle, b)$ and measures the extent of the refutation against the hypothesis $p$ given the result $a$ \cite{Gacs21}. $\d(a|p)=\d(a|p,\emptyset)$. The amount of information that the halting sequence $\ch\in\IS$ has about $a\in\FS$, conditional to $y\in\FS$ is $\I(a;\ch|y)=\K(a|y)-\K(a|y,\ch)$. $\I(a;\ch)=\I(a;\ch|\emptyset)$. We use ${\lea} f$ to denote ${<}f{+}O(1)$ and ${\lel} f$ to denote ${<}f {+} O(\log(f{+}1))$. For a mathematical statement $A$, let $[A]=1$ if $A$ is true and $[A]=0$, otherwise. The chain rule gives $\K(x,y)\eqa \K(x|y,\K(y))+\K(y)$.
The following definition is from \cite{Levin74} .
\begin{dff}[Information]For infinite sequences $\alpha,\beta\in\IS$, their mutual information is defined to be 
$\I(\alpha\,{:}\,\beta){=}$
 $\log\sum_{x,y\in\FS}2^{\I(x:y)-\K(x|\alpha)-\K(y|\beta)}$.
\end{dff}

\noindent The Independence Postulate (\textbf{IP}), \cite{Levin84,Levin13}, is an unprovable inequality on the information content shared between two sequences. \textbf{IP} is a finitary Church Turing Thesis, postulating that certain infinite and finite sequences cannot be found in nature, a.k.a. have high “physical addresses”.\\

\noindent \textbf{IP}\textit{: Let $\alpha$ be a sequence defined with an $n$-bit mathematical statement, and a sequence $\beta$ can be located in the physical world with a $k$-bit instruction set. Then $\I(\alpha:\beta)<k+n+c$ for some small absolute 
constant $c$.}

\begin{lmm}[\cite{EpsteinShort23}]
\label{lmm:discrete}
For probability $p$ over $\N$, $D{\subset}\N$, $|D|=2^s$, $s < \max_{a\in D}\d(a|p)+\I(D;\ch)+O(\log\I(D;\ch)+\log \K(p))$. 
\end{lmm}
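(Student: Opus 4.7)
My plan is to exhibit, by pigeonhole, a low-probability element $a^* \in D$, build a $\ch$-lower-semicomputable $p$-bounded test $\tau^\ch$ that concentrates on $a^*$ via an average over all candidate size-$2^s$ subsets weighted by their halting-relativized a priori probability, and convert the $\ch$-test bound into an ordinary randomness-deficiency bound using the halting-information identity $\m(a|y) \eqm \m^\ch(a|y)\,2^{-\I(a;\ch|y)}$.

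Execution goes as follows. First, $\sum_{a \in D} p(a) \leq 1$ together with $|D| = 2^s$ forces some $a^* \in D$ to have $p(a^*) \leq 2^{-s}$; fix $a^*$ as the lex-first such element, so that it is a deterministic function of $(D, \langle p\rangle)$. For each size-$2^s$ candidate $D'$ define $a^*(D')$ analogously, and set
\[
\tau^\ch(a) \;=\; \sum_{D':\,|D'|=2^s} \m^\ch(D'|\langle p\rangle)\cdot [a=a^*(D')]/p(a).
\]
This is $\ch$-lower-semicomputable from $\langle p\rangle$, and $\sum_a p(a)\,\tau^\ch(a) \leq \sum_{D'} \m^\ch(D'|\langle p\rangle) \leq 1$; at our $D$, $\tau^\ch(a^*) \gem 2^{s - \K(D|\langle p\rangle, \ch)}$. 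The standard conversion of a $\ch$-lsc $p$-test into the ordinary deficiency, namely $\log \tau^\ch(a) \lea \d(a|p) + \I(a;\ch|\langle p\rangle)$, then yields
\[
\d(a^*|p) \;\gea\; s - \K(D|\langle p\rangle, \ch) - \I(a^*;\ch|\langle p\rangle) - O(1).
\]

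The hard part will be reducing $\K(D|\langle p\rangle,\ch)+\I(a^*;\ch|\langle p\rangle)$ to $\I(D;\ch)$ plus the advertised logarithmic error. A naive application of conservation of information ($\I(a^*;\ch|\langle p\rangle) \lea \I(D;\ch|\langle p\rangle)$, since $a^*$ is a deterministic function of $(D,\langle p\rangle)$) combined with the chain-rule identity $\K(D|\langle p\rangle, \ch)+\I(D;\ch|\langle p\rangle)=\K(D|\langle p\rangle)$ collapses the right-hand side into the weaker $s - \K(D|\langle p\rangle)$, which is useless whenever $D$ has high complexity but little halting information. Circumventing this collapse requires a more refined witness selection inside each $D'$---e.g., replacing the single lex-first element by an $\m^\ch$-weighted superposition over all minimum-$p$ candidates---so that the chosen $a^*$ has $\I(a^*;\ch|\langle p\rangle)$ logarithmic in $\I(D;\ch)$ rather than comparable to it, and the $\K(D|\langle p\rangle,\ch)$ term is simultaneously absorbed into the $\log\I$ correction. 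The $O(\log \K(p))$ overhead should then materialize when the remaining conditional $\I(D;\ch|\langle p\rangle)$ is converted to the unconditional $\I(D;\ch)$ via a self-delimited description of $\langle p\rangle$ in the standard symmetry-of-information argument.
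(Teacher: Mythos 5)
Your construction sets up the easy half of the argument correctly: the pigeonhole choice of $a^*$ with $p(a^*)\leq 2^{-s}$, the averaged test $\tau^\ch$, and the conversion $\log\tau^\ch(a)\lea \d(a|p)+\I(a;\ch|\langle p\rangle)$ are all sound. But, as you yourself observe in your last paragraph, this machinery only delivers $s\lea \max_{a\in D}\d(a|p)+\K(D|\langle p\rangle)$, which is vacuous in exactly the case where the lemma has content: a set $D$ of high complexity that is independent of $\ch$, so that $\I(D;\ch)\approx 0$ and the lemma forces a genuinely deficient member. The obstruction is structural, not a matter of tuning the witness selection. Any scheme in which the test pays for $D$ through $\m^\ch(D'|\langle p\rangle)$ summed over all $2^s$-element candidates $D'$ will leave a residual term of order $\K(D|\langle p\rangle,\ch)$, because the test must spread its unit mass over all candidate sets; replacing the lex-first minimum-$p$ element by an $\m^\ch$-weighted superposition of minimum-$p$ candidates changes nothing in this accounting, and your sketch of how the $\K(D|\langle p\rangle,\ch)$ term would be ``absorbed into the $\log\I$ correction'' is not an argument. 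So there is a genuine gap, and it sits precisely at the step that constitutes the entire content of the lemma.

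What is actually required is the discrete Epstein--Levin (``sets have simple members'' / outlier) machinery: a lower-semicomputable semimeasure guaranteed to place weight about $2^{s}p(a)$ on \emph{some} member $a$ of \emph{every} $2^s$-element set, with the $\I(D;\ch)$ term arising not from a test conversion but from the cost of locating $D$ relative to a stage of the enumeration of $\Omega$ (equivalently, via left-total machines or a combinatorial assignment argument). That construction is substantially harder than anything in your sketch, and it is exactly why the present paper does not prove Lemma~\ref{lmm:discrete} at all but imports it as a black box from \cite{EpsteinShort23}; there is no in-paper proof to compare against. Your proposal should be read as a correct reduction of the lemma to its hard core, not as a proof of it.
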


\begin{lmm}[\cite{EpsteinDerandom22}]
\label{lmm:cons}For partial computable $f$, $\I(f(x):\ch)\lea \I(x;\ch)+\K(f)$.
\end{lmm}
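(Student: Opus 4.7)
The plan is to apply the chain rule for prefix complexity both unconditionally and relative to $\ch$, and to exploit the $\ch$-computability of $\K$ to cancel the awkward residual terms. First I will note that since $f$ is partial computable, $\K(f(x)|x)\lea \K(f)$, so $\K(x,f(x))$ lies between $\K(x)$ and $\K(x)+\K(f)$ up to $O(1)$, with the analogous bounds holding after conditioning on $\ch$.

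Next I will expand $\K(x,f(x))$ using the chain rule stated in the conventions section, both unconditionally
\[
\K(x,f(x)) \eqa \K(f(x))+\K(x|f(x),\K(f(x))),
\]
and in its $\ch$-relativization
\[
\K(x,f(x)|\ch) \eqa \K(f(x)|\ch)+\K(x|f(x),\K(f(x)|\ch),\ch).
\]
Subtracting these and plugging in the size bounds above collapses everything to
\[
\I(f(x);\ch)+\Delta \lea \I(x;\ch)+\K(f),
\]
where $\Delta = \K(x|f(x),\K(f(x)))-\K(x|f(x),\K(f(x)|\ch),\ch)$.

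The remaining task, which I expect to be the main delicate step, is showing $\Delta \gea 0$. Adding $\ch$ to a conditional can only decrease complexity, so $\K(x|f(x),\K(f(x))) \gea \K(x|f(x),\K(f(x)),\ch)$. The key observation is that, with $\ch$ as oracle, both $y\mapsto\K(y)$ and $y\mapsto\K(y|\ch)$ are $\ch$-computable from $y$, so once $\ch$ is in the conditional the scalars $\K(f(x))$ and $\K(f(x)|\ch)$ are redundant. Therefore
\[
\K(x|f(x),\K(f(x)),\ch) \eqa \K(x|f(x),\ch) \eqa \K(x|f(x),\K(f(x)|\ch),\ch),
\]
which gives $\Delta \gea 0$, and combining with the previous inequality yields the lemma.

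The hard part will be the computability bookkeeping above: one must verify that $\ch$ permits computing both $\K(\cdot)$ and $\K(\cdot|\ch)$ with only $O(1)$ description overhead, so that the self-referential complexity values in the chain-rule conditions can safely be inserted or removed modulo $\eqa$.
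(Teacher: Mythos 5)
The paper does not actually prove this lemma; it imports it from [Eps22a], so there is no in-paper argument to compare against. Your double-chain-rule strategy is the standard route to this conservation inequality at $O(1)$ precision, and the skeleton is right: expanding $\K(x,f(x))$ and $\K(x,f(x)|\ch)$ with the chain rule, bounding $\K(x,f(x))\lea\K(x)+\K(f)$ and $\K(x,f(x)|\ch)\gea\K(x|\ch)$, and reducing everything to $\Delta\gea 0$ is exactly the right reduction.

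The one genuine flaw is in your justification of $\Delta\gea 0$. The claim that $y\mapsto\K(y|\ch)$ is $\ch$-computable is false: deciding whether a program with oracle $\ch$ halts is a $\Sigma_1^{\ch}$ (i.e.\ $\Sigma_2$) question, so $\K(\cdot|\ch)$ is only upper semi-computable relative to $\ch$, not computable relative to $\ch$; only $y\mapsto\K(y)$ is $\ch$-computable. Consequently the second $\eqa$ in your displayed chain, $\K(x|f(x),\ch)\eqa\K(x|f(x),\K(f(x)|\ch),\ch)$, is unjustified as an equality. Fortunately you only need the $\gea$ direction, and that direction is free: adding a string to the condition never increases conditional prefix complexity, so $\K(x|f(x),\ch)\gea\K(x|f(x),\K(f(x)|\ch),\ch)$ holds with no computability hypothesis at all. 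With that repair the chain $\K(x|f(x),\K(f(x)))\gea\K(x|f(x),\K(f(x)),\ch)\eqa\K(x|f(x),\ch)\gea\K(x|f(x),\K(f(x)|\ch),\ch)$ goes through using only the (true) $\ch$-computability of $\K$, giving $\Delta\gea 0$, and the lemma follows.
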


\section{Sets with Low Randomness Deficiencies}
A continuous probability $P$ over $\IS$ is identified with a function $P:\FS\rightarrow \R_{\geq 0}$, where $P(\emptyset)=1$ and $P(x)=P(x0)+P(x1)$. Randomness deficiency can be extended to continous probability measures with the following definition. 

\begin{dff}
The randomness deficiency of $\alpha\in\IS$ with respect to computable continuous probability measure $P$ is $\D(\alpha|P)=\sup_n-\log P(\alpha[0..n])-\K(\alpha[0..n]|\langle P\rangle)$. The term $\langle P\rangle$ is a program to compute $P$.
\end{dff}

\begin{rmk}
Let $\Omega = \sum\{2^{-\|p\|}:U(p)\textrm{ halts}\}$ be Chaitin's Omega and $\Omega^t = \sum\{2^{-\|p\|}:U(p)\textrm{ halts in time $t$}\}$. For a string $x$, let $BB(x)=\min \{t:\Omega^t>0.x+2^{-\|x\|}\}$. Note that $BB(x)$ is undefined if $0.x+2^{-\|x\|}>\Omega$. For $n\in \N$, let $\bb(n) = \max\{BB(x): \|x\|\leq n\}$. $\bb^{-1}(m) = \arg\min_n \{\bb(n-1)<m\leq \bb(n)\}$. Let $bb(n)=\arg\max_x\{BB(x) :\|x\|\leq n\}$.
\end{rmk}
\begin{lmm}
\label{lmm:rec}
For $n=\bb^{-1}(m)$, $\K(bb(n)|m,n)=O(1)$.
\end{lmm}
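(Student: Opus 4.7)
The plan is to exhibit a constant-size algorithm that, on input $(m,n)$ satisfying $\bb(n-1)<m\le\bb(n)$, computes $bb(n)$ from a sufficiently long prefix of the rational number $\Omega^m$. The starting point is a clean closed form for $bb(n)$: since $\Omega^t$ is nondecreasing in $t$, the map $BB(x)=\min\{t:\Omega^t>0.x+2^{-\|x\|}\}$ is monotone in the threshold $0.x+2^{-\|x\|}$, so writing $\Omega=0.\omega_1\omega_2\cdots$ in binary, the maximizer $bb(n)$ is the $n$-bit string $x$ for which $0.x+2^{-n}$ is the largest multiple of $2^{-n}$ still $\le\Omega$, namely $0.bb(n)+2^{-n}=0.\omega_1\cdots\omega_n$. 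In particular $bb(n)=\omega_1\cdots\omega_{n-1}0$ whenever $\omega_n=1$.

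The hypothesis $n=\bb^{-1}(m)$ supplies exactly that $\omega_n=1$: if $\omega_n=0$ then $0.\omega_1\cdots\omega_n=0.\omega_1\cdots\omega_{n-1}$, so the thresholds defining $\bb(n-1)$ and $\bb(n)$ coincide, giving $\bb(n-1)=\bb(n)$ and contradicting $\bb(n-1)<m$. The algorithm may therefore treat the $n$th bit as a known $1$ and need only recover $\omega_1\cdots\omega_{n-1}$ from $(m,n)$.

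For this recovery step I would dovetail $U$ on all inputs for $m$ steps to produce the exact rational $\Omega^m$. Because $m>\bb(n-1)$, we have $\Omega^m>0.\omega_1\cdots\omega_{n-1}$, and because $\Omega<0.\omega_1\cdots\omega_{n-1}+2^{-(n-1)}$, we have $\Omega^m\le\Omega<0.\omega_1\cdots\omega_{n-1}+2^{-(n-1)}$; thus $\Omega^m$ lies strictly inside the length-$2^{-(n-1)}$ dyadic interval with prefix $\omega_1\cdots\omega_{n-1}$, and its first $n-1$ binary digits are precisely $\omega_1\cdots\omega_{n-1}$. Reading these off and appending a $0$ yields $bb(n)$, establishing $\K(bb(n)\mid m,n)=O(1)$.

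I do not expect a genuine obstacle; the argument is essentially an inequality chase. The only care needed is in bookkeeping the strict-versus-nonstrict inequalities at the boundary cases $m=\bb(n-1)+1$ and $m=\bb(n)$, to make sure $\Omega^m$ really sits in the claimed open dyadic interval and that the inference $\omega_n=1$ is unambiguous.
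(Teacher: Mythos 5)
Your proof is correct and is essentially the paper's argument in a slightly more explicit form: both constructions use $(m,n)$ as a time budget for approximating $\Omega$ and derive correctness from the sandwich $\bb(n-1)<m\le\bb(n)$. The paper packages this as a search for the first length-$n$ string $y$ with $BB(y)\ge m$ (ruling out smaller candidates by passing to $y$ with its last bit deleted, which would force $\bb(n-1)\ge m$), whereas you read the first $n-1$ bits of $\Omega^m$ directly and note that $\omega_n=1$ is forced; the two algorithms produce the same string.
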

\begin{proof}
Enumerate strings of length $n$, starting with $0^n$, and return the first string $y$ such that $BB(y)\geq m$. This string $y$ is equal to $bb(n)$, otherwise $BB(y^-)$ is defined and $BB(y^-)\geq BB(y)\geq m$. Thus $\bb(n-1)\geq m$, causing a contradiction.
\end{proof}
\begin{prp}$ $\\
\label{prp:bb}
\vspace*{-0.5cm}
    \begin{enumerate}
    \item $\K(bb(n))\gea n$.
    \item $\K(bb(n)|\mathcal{H})\lea \K(n)$.
    \end{enumerate}
\end{prp}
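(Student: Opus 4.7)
The plan is to prove both parts by exploiting that $bb(n)$ encodes the first $n$ bits of Chaitin's $\Omega$; write $\Omega_n=0.\omega_1\ldots\omega_n$ for the $n$-bit truncation.

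For part 1, I would first observe that the threshold map $x\mapsto 0.x+2^{-\|x\|}$ sends strings of length $\leq n$ onto the dyadics $\{p/2^n:1\leq p\leq 2^n\}$, and that $BB$ is a strictly increasing function of this threshold wherever it is defined. The maximum over $\|x\|\leq n$ is therefore attained at the largest such dyadic lying below $\Omega$, which, since $\Omega$ is irrational, is exactly $\Omega_n$. The algorithm implicit in the proof of Lemma~\ref{lmm:rec} also forces $\|bb(n)\|=n$, so the identity $0.bb(n)+2^{-n}=\Omega_n$ lets one recover the $n$-bit string $\omega_1\ldots\omega_n$ from $bb(n)$ alone by adding $1$ in binary. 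Thus $\K(\omega_1\ldots\omega_n|bb(n))=O(1)$, and Chaitin's incompressibility bound $\K(\omega_1\ldots\omega_n)\gea n$ then gives $\K(bb(n))\gea n$.

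For part 2, I would exhibit a fixed algorithm that, with oracle $\ch$ and input $n$, prints $bb(n)$. Since $\Omega\leq_T\ch$, there is such an algorithm that computes $\Omega_n$: enumerate the lower approximations $\Omega^t$ and use $\ch$ to detect the first stage at which the leading $n$ bits have stabilized. From $\Omega_n$ one reads off $m=\bb(n)=\min\{t:\Omega^t>\Omega_n\}$ by direct simulation, and then Lemma~\ref{lmm:rec} converts the pair $(m,n)$ into $bb(n)$ in $O(1)$ extra bits. Composing, $\K(bb(n)|\ch)\leq\K(n)+O(1)$, which is $\K(bb(n)|\ch)\lea\K(n)$.

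The main obstacle I expect is the structural identification in part 1: one must verify that the length-$\leq n$ argmax is realized by a length-$n$ string at threshold exactly $\Omega_n$, so that both $n$ and $\Omega_n$ are recoverable from $bb(n)$ with no side information. Once that identity is in hand, part 1 follows immediately from Chaitin, and part 2 is a routine composition of the Turing reduction $\Omega\leq_T\ch$ with Lemma~\ref{lmm:rec}.
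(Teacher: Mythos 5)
The paper states this proposition with no proof at all, so there is nothing of the author's to compare against; judged on its own, your argument is correct and follows the standard route: part~1 reduces to Chaitin's bound $\K(\omega_1\ldots\omega_n)\gea n$ for the bits of $\Omega$, and part~2 composes the reduction $\Omega\leq_T\ch$ with Lemma~\ref{lmm:rec}. One caveat on part~1. The map from the threshold $0.x+2^{-\|x\|}$ to $BB(x)$ is only non-decreasing, not strictly increasing: a single jump of $\Omega^t$ can clear several dyadics at once, so the maximizer in the definition of $bb(n)$ need not be unique, and under the tie-breaking implicit in the proof of Lemma~\ref{lmm:rec} (the first length-$n$ string, in increasing order, attaining $\bb(n)$) the identity $0.bb(n)+2^{-n}=0.\omega_1\ldots\omega_n$ that you lean on can fail, since $bb(n)$ may sit at a strictly smaller dyadic that ties. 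The fix is cheap and preserves your structure: from $bb(n)$, which has length $n$, compute $m=BB(bb(n))=\bb(n)$ by simulation; because $0.\omega_1\ldots\omega_n<\Omega^m\leq\Omega<0.\omega_1\ldots\omega_n+2^{-n}$, the first $n$ bits of the finite dyadic $\Omega^m$ are exactly $\omega_1\ldots\omega_n$, so $\K(\omega_1\ldots\omega_n\,|\,bb(n))=O(1)$ without the exact-threshold identification, and Chaitin's incompressibility bound finishes the argument as you say.

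Part~2 is fine as written: $\ch$ computes any $n$ bits of $\Omega$ (using the Kraft tail bound $\sum_{\|p\|>N}2^{-\|p\|}\leq 2^{-N}$ to certify stabilization), simulation recovers $m=\bb(n)$ from those bits, and Lemma~\ref{lmm:rec} converts $(m,n)$ into $bb(n)$, giving $\K(bb(n)|\ch)\leq\K(n)+O(1)$.
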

The following lemma, while lengthy, is a series of straightforward application of inequalities.
\begin{lmm}
\label{lmm:main}
For continuous probability $P$ over $\IS$, $Z\subset\IS$, $|Z|=2^s$, $s\lel\max_{\alpha\in Z}\D(\alpha|P) + \I(\langle Z\rangle:\ch)+O(\log \K(P))$.
\end{lmm}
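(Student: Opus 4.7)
The plan is to reduce this continuous statement to the discrete Lemma~\ref{lmm:discrete} by truncating the sequences in $Z$ at an appropriate finite depth. Since $Z$ is a finite collection of distinct infinite sequences, there is a minimal $n$ at which the prefixes $\{\alpha[0..n]:\alpha\in Z\}$ become pairwise distinct; let $D$ denote this set of prefixes, so $|D|=2^s$. Note that $n$ is determined by $\langle Z\rangle$. The continuity conditions $P(\emptyset)=1$ and $P(x)=P(x0)+P(x1)$ imply that $p(x):=P(x)$ for $|x|=n$ defines a genuine discrete probability on $\BT^n$ (embedded in $\N$); since $p$ is computed from $\langle P\rangle$ and $n$, we have $\K(p)\lea\K(P)+\K(n)$.

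Next, for each $\alpha\in Z$ with prefix $x=\alpha[0..n]\in D$, the definition of $\D$ as a supremum over levels yields
\[\d(x|p)=\floor{-\log p(x)}-\K(x|\langle p\rangle)\lea -\log P(x)-\K(x|\langle P\rangle)+O(\K(n))\leq\D(\alpha|P)+O(\K(n)),\]
using that $\langle p\rangle$ is reconstructible from $\langle P\rangle$ and $n$, so $\K(x|\langle p\rangle)\gea\K(x|\langle P\rangle)-O(\K(n))$. For the information term, $D$ is computably recovered from $\langle Z\rangle$ together with $n$, so Lemma~\ref{lmm:cons} and standard relations between the finite- and infinite-sequence information quantities give $\I(D;\ch)\lea\I(\langle Z\rangle:\ch)+O(\K(n))$.

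Substituting into Lemma~\ref{lmm:discrete} produces
\[s<\max_{\alpha\in Z}\D(\alpha|P)+\I(\langle Z\rangle:\ch)+O\bigl(\K(n)+\log\I(\langle Z\rangle:\ch)+\log(\K(P)+\K(n))\bigr),\]
and all the $O(\K(n))$ shifts together with the log-of-complexity corrections should be absorbed by the implicit $O(\log(\cdot))$ slack of $\lel$ on the right-hand side, yielding the claimed inequality.

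The hardest part will be the bookkeeping around the truncation depth $n$: $n$ is determined by $Z$ but is a priori unbounded, so each occurrence of $\K(n)$---from shifting the conditioning from $\langle P\rangle$ to $\langle p\rangle$, from recovering $D$ from $\langle Z\rangle$, and from the log-of-complexity correction in Lemma~\ref{lmm:discrete}---must be verified to fit inside the $\lel$ slack without introducing new uncontrolled dependencies. No deeper idea appears to be required, which matches the author's characterisation of the proof as a ``lengthy but straightforward application of inequalities''.
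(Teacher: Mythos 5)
Your reduction to Lemma~\ref{lmm:discrete} by truncating at the minimal distinguishing depth is the natural first attempt, but the step you flag as ``bookkeeping'' is exactly where the argument breaks, and it cannot be repaired within your framework. Passing from $\K(x|\langle p\rangle)=\K(x|\langle P\rangle,n)+O(1)$ back to $\K(x|\langle P\rangle)$ costs an additive $\K(n)$, and this is tight: for a string $x$ of length $n$ that is random given $n$ one has $\K(x|n)=n+O(1)$ while $\K(x)=n+\K(n)+O(1)$, so under the uniform measure $\d(x|p)$ exceeds $-\log P(x)-\K(x|\langle P\rangle)$ by essentially $\K(n)$. Hence your final inequality carries an extra additive $O(\K(n))$ term. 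The slack in $\lel$ is only logarithmic in the right-hand side, i.e.\ $O(\log(\max_{\alpha\in Z}\D(\alpha|P)+\I(\langle Z\rangle:\ch)+\log\K(P)))$, and the minimal distinguishing depth $n$, while determined by $Z$, is in no way bounded by these quantities: two sequences of $Z$ may agree on their first $N$ bits for an $N$ with $\K(N)$ arbitrarily large while every $\alpha\in Z$ is random for $P$ and $Z$ is independent of $\ch$. So the $O(\K(n))$ shift is not absorbed and the claimed inequality does not follow.

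The paper's proof is organized around circumventing precisely this loss, and the device it uses is the one genuinely non-trivial idea of the argument: instead of truncating at the minimal depth $m$, it truncates at $k=\bb(n)$ for $n=\bb^{-1}(m)$. Then $|Z_k|=|Z|$ still, the cost of specifying the truncation depth is $\K(k)\lea n+\K(n)$ with $n=\bb^{-1}(m)$ very small, and, crucially, this cost is itself dominated by the information term: by Lemma~\ref{lmm:rec} the string $bb(n)$ is computable from $Z_k$ given $\K(n)$ extra bits, while Proposition~\ref{prp:bb} gives $\K(bb(n))\gea n$ and $\K(bb(n)|\ch)\lea\K(n)$, whence $n\lel\I(Z_k;\ch)\lel\I(\langle Z\rangle:\ch)$. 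All the $\K(k)$, $\K(n)$ and $O(\log n)$ error terms are then absorbed into $\I(\langle Z\rangle:\ch)$ rather than into the logarithmic slack. Without some such mechanism tying the complexity of the truncation depth to the right-hand side of the inequality, your proposal has a genuine gap.
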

\begin{proof}
We relativize the universal Turing machine to $s$, which can be done due to the precision of the theorem. Let $Z_n = \{\alpha[0..n]:\alpha\in Z\}$ and $m=\arg\min_m |Z_m|=|Z|$. Let $n=\bb^{-1}(m)$ and $k=\bb(n)$. Let $p$ be a probability over $\FS$, where $p(x) = [\|x\|=k]P(x)$ and $\langle p\rangle = \langle k,P\rangle$. Using $D=Z_k$, Lemma \ref{lmm:discrete} relativized to $k$ produces $x\in Z_k$, where 
\begin{align*}
s &\lel -\log P(x)-\K(x|k,P)+\I(Z_k;\ch|k)+O(\log \K(P,k|k)) \\
&\lel -\log P(x)-\K(x|P) + \K(Z_k|k)+\K(k)-\K(Z_k|k,\ch)+O(\log \K(P)).
\end{align*}
Since $\K(k)\lea n+\K(n)$, by the chain rule,
\begin{align*}
&\K(Z_k|k)+\K(k)\\
\lea& \K(Z_k|k,\K(k))+\K(\K(k)|k)+\K(k)\\
<& \K(Z_k,k)+O(\log n)\\
<& \K(Z_k)+O(\log n).
\end{align*}
So
\begin{align*}
s&\lel -\log P(x)-\K(x|P) + \K(Z_k)-\K(Z_k|k,\ch)+O(\log n+\log \K(P)).
\end{align*}

Since $\K(k|n,\ch)=O(1)$, $\K(Z_k|\ch)\lea \K(Z_k|k,\ch)+\K(n)$. So
\begin{align*}
s&\lel -\log P(x)-\K(x|P)+\I(Z_k;\ch)+O(\log n+\log \K(P)).
\end{align*}
By Lemma \ref{lmm:rec}, $\K(bb(n)|Z_k)\lea \K(n)$ so by Lemma by \ref{lmm:cons} and Proposition \ref{prp:bb},
\begin{align*}
n&\lel\I(bb(n);\ch) \lel \I(Z_k;\ch)+\K(n)\lel\I(Z_k;\ch).
\end{align*}
So
\begin{align*}
s&\lel -\log P(x)-\K(x|P)+\I(Z_k;\ch)+O(\log \K(P)).
\end{align*}
By the definition of mutual information $\I$ between infinite sequences 
$$\I(Z_k;\ch)\lea \I(Z:\ch)+\K(Z_k|Z)\lel \I(Z:\ch)+\K(k|Z).$$
Now $m$ is simple relative to $Z$ and by Lemma \ref{lmm:rec}, $bb(n)$ is simple relative to $m$ and $n$. Furthermore $k$ is simple relative to $bb(n)$. Therefore
$\K(Z_k|Z) \lea \K(n)$. So
\begin{align*}
s &\lel -\log P(x) -\K(x|P) + \I(Z:\ch)+\K(n)+O(\log \K(P))\\
s &\lel \max_{\alpha\in Z}\D(\alpha|P) + \I(Z:\ch))+O(\log \K(P)).
\end{align*}
\qed
\end{proof}
$ $\\
Through careful observation, the above lemma can even be tightened to the following corollary
\begin{cor}
\label{cor}
For continuous probability $P$ over $\IS$, $Z\subset\IS$, $|Z|=2^s$, $s<\max_{\alpha\in Z}\D(\alpha|P) + \I(\langle Z\rangle:\ch)+O(\log \I(\langle Z\rangle:\ch)+ \log \K(P))$.
\end{cor}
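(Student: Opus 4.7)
The plan is to rerun the proof of Lemma \ref{lmm:main} with explicit bookkeeping of the logarithmic slacks that the $\lel$ notation hides. Write $M = \max_{\alpha \in Z} \D(\alpha|P)$ and $J = \I(\langle Z\rangle : \ch)$; the goal is to replace the implicit $O(\log(M + J + \K(P)))$ slack arising from the lemma's final $\lel$ with the sharper $O(\log J + \log \K(P))$ demanded by the corollary, i.e., to verify that $\log M$ never actually has to enter the error term.

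The crucial preliminary observation is that the chain $n \lel \I(bb(n);\ch) \lel \I(Z_k;\ch) \lel \I(Z:\ch) + \K(n)$ appearing in the lemma's proof already forces $n \leq J + O(\log J)$. Consequently every $O(\log n)$ produced by intermediate chain-rule manipulations collapses to $O(\log J)$; in particular $\K(n)$, $\K(k|\ch)$, and the $O(\log n)$ picked up by the rewrite $\K(Z_k|k) + \K(k) \lea \K(Z_k) + O(\log n)$ are all $O(\log J)$.

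With that in hand I would follow the lemma's chain-rule manipulations verbatim but record each error term explicitly. After Lemma \ref{lmm:discrete} relativized to $k$ and the substitution $\K(x|P) \lel \K(x|k,P) + \K(k)$, the $+\K(k)$ inflation cancels against $\K(Z_k|k) + \K(k) \lea \K(Z_k) + O(\log n)$, and $-\K(Z_k|k,\ch)$ is replaced by $-\K(Z_k|\ch) + O(\log n)$ using $\K(k|\ch) = O(\log n)$. One reaches
\begin{align*}
s &< -\log P(x) - \K(x|P) + \I(Z_k;\ch) + O(\log n) + O(\log \I(Z_k;\ch|k) + \log \K(P)).
\end{align*}
Bounding $-\log P(x) - \K(x|P) \leq M$, $\I(Z_k;\ch) \leq J + O(\log n)$, and $\I(Z_k;\ch|k) \leq \I(Z_k;\ch) + \K(k|\ch) + O(\log) \leq J + O(\log J)$ then collapses every slack to $O(\log J + \log \K(P))$, yielding the corollary.

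The main obstacle will be the bookkeeping for the conversion $\K(x|k, P) \to \K(x|P)$: it introduces a $+\K(k)$ term of order $n$, not of order $\log n$, whose cancellation depends critically on the telescoping with $\K(Z_k|k) + \K(k)$ in the original argument. Once that telescoping is carried out with the $O(\log n)$ residue made explicit rather than hidden behind $\lel$, and once the pinning $n \leq J + O(\log J)$ is invoked to absorb every remaining logarithmic slack, the desired bound follows without any appearance of $\log M$.
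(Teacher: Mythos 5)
Your proposal is correct and matches the paper's intent exactly: the paper offers no explicit proof of the corollary beyond the remark that it follows from ``careful observation'' of the proof of Lemma \ref{lmm:main}, and your bookkeeping is precisely that observation, with the key points correctly identified — the error term of Lemma \ref{lmm:discrete} already has the form $O(\log \I(\cdot;\ch)+\log\K(\cdot))$, the large $+\K(k)$ term telescopes against $\K(Z_k|k)$, and the pinning $n\lel\I(\langle Z\rangle:\ch)$ converts every residual $O(\log n)$ into $O(\log \I(\langle Z\rangle:\ch))$, so $\log$ of the deficiency never enters.
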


\section{Observations as Reals}

We model observations as infinite sequences of reals in the interval $[0,1]$, or equivalently infinite sequences $\gamma$ of infinite sequences $\gamma_i\in\IS$, where each $\gamma_i$ is unique. Of course, in the real world, infinite sequences of observations do not exist. But infinite sequences model processes that are potentially never ending. Let $\langle\gamma\rangle\in\IS$ be a standard encoding of $\gamma$. Let $\gamma(n)\subset\IS$be the first $2^n$ infinite sequences of $\gamma$. The following theorem uses the simple fact that $\I(f(\alpha):\ch)\lea \I(\alpha:\ch)+\K(f)$, for $\alpha\in\IS$.

\begin{thr}$ $\\
\label{thr}
\noindent For probability $P$ over $\IS$, $\gamma\in{\IS}^\N$, let $t_{\gamma,P} = \sup_n(n-\K(n)-\max_{\alpha\in\gamma(n)}\D(\alpha|P))$. Then $t_{\gamma,P}\lel \I(\langle \gamma\rangle:\ch)+O(\log\K(P))$.
\end{thr}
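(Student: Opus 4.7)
The plan is to apply Lemma~\ref{lmm:main} separately with $Z=\gamma(n)$ for each $n\in\N$, and then to transfer the resulting mutual-information term from $\langle\gamma(n)\rangle$ back to $\langle\gamma\rangle$ by means of the infinite-sequence analog of Lemma~\ref{lmm:cons} flagged in the paragraph immediately preceding the theorem.

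Fix $n\in\N$. Since the components of $\gamma$ are distinct we have $|\gamma(n)|=2^n$, so Lemma~\ref{lmm:main} with $Z=\gamma(n)$ and $s=n$ gives
\[ n\lel \max_{\alpha\in\gamma(n)}\D(\alpha|P)+\I(\langle\gamma(n)\rangle:\ch)+O(\log\K(P)). \]
Next, there is a partial computable function $f_n$ of complexity $\K(f_n)\lea\K(n)$ that reads $\langle\gamma\rangle$ and outputs $\langle\gamma(n)\rangle$ by re-encoding the first $2^n$ of its listed components. Applying the stated fact $\I(f(\alpha):\ch)\lea\I(\alpha:\ch)+\K(f)$ for $\alpha\in\IS$ yields
\[ \I(\langle\gamma(n)\rangle:\ch)\lea \I(\langle\gamma\rangle:\ch)+\K(n). \]

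Combining the two displays and transposing $\K(n)$ to the left,
\[ n-\K(n)-\max_{\alpha\in\gamma(n)}\D(\alpha|P)\lel\I(\langle\gamma\rangle:\ch)+O(\log\K(P)). \]
Since the right-hand side is a fixed function of $\gamma$ and $P$ and the constants hidden by $\lel$ and $O(\cdot)$ are uniform in $n$, taking $\sup_n$ on the left recovers $t_{\gamma,P}\lel\I(\langle\gamma\rangle:\ch)+O(\log\K(P))$, which is the theorem.

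The only real obstacle is bookkeeping of the logarithmic slack in $\lel$ as $n$ varies: chaining the two inequalities temporarily produces $O(\log\K(n))$ and $O(\log n)$ terms, so one has to check that any $n$ for which the left-hand side substantially exceeds $\I(\langle\gamma\rangle:\ch)+O(\log\K(P))$ is automatically small enough that those log contributions can be folded into the slack $O(\log(\I(\langle\gamma\rangle:\ch)+\K(P)+1))$ permitted by the target $\lel$. Apart from this routine check, the argument is nothing more than a two-step application of the two cited facts.
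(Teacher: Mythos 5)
Your architecture is the same as the paper's: apply the finite-set deficiency bound to $Z=\gamma(n)$, pay $\K(n)$ to convert $\I(\langle\gamma(n)\rangle:\ch)$ into $\I(\langle\gamma\rangle:\ch)$ via a computable re-encoding, and take the supremum over $n$. The one real divergence is that you invoke Lemma~\ref{lmm:main} where the paper invokes Corollary~\ref{cor}, and that difference is exactly where your argument has a gap. By the paper's convention, the $\lel$ in Lemma~\ref{lmm:main} hides a slack of
\[
O\left(\log\left(\max_{\alpha\in\gamma(n)}\D(\alpha|P)+\I(\langle\gamma(n)\rangle:\ch)+\K(P)+1\right)\right),
\]
which contains an $O(\log\max_{\alpha\in\gamma(n)}\D(\alpha|P))$ term that can be as large as $O(\log n)$. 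The theorem's conclusion only tolerates slack $O(\log(\I(\langle\gamma\rangle:\ch)+\K(P)+1))$, and the subtracted $\K(n)$ cannot absorb $O(\log n)$ because $\K(n)$ may be exponentially smaller than $\log n$. Your proposed repair --- that any $n$ at which the left-hand side exceeds the bound is ``automatically small enough'' for the log contributions to fold in --- is not true: take $n$ enormous but of tiny complexity and $\max_{\alpha\in\gamma(n)}\D(\alpha|P)=n-\K(n)-T$ for a moderate $T$; then the quantity inside the supremum equals $T$, yet the Lemma's inequality is already satisfied vacuously by its own $O(\log n)$ slack and certifies nothing about $\I(\langle\gamma\rangle:\ch)$. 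This is precisely why the paper ``tightens'' the lemma to Corollary~\ref{cor}, whose error term $O(\log\I(\langle Z\rangle:\ch)+\log\K(P))$ no longer references the deficiency. Replacing the Lemma by the Corollary in your first display removes the problematic term; the transfer step and the supremum step you give are correct and identical to the paper's.
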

\begin{proof}
By Corollary \ref{cor} applied to $\gamma(n)$,
\begin{align*}
n&< \max_{\alpha\in\gamma(n)}\D(\alpha|P)+\I(\gamma(n):\ch)+O(\log\I(\gamma(n):\ch)+\log\K(P))\\
n-\max_{\alpha\in\gamma(n)}\D(\alpha|P)&\lel +\I(\gamma(n):\ch)+O(\log\K(P))\\
n-\max_{\alpha\in\gamma(n)}\D(\alpha|P)-\K(n)&\lel +\I(\langle \gamma\rangle:\ch)+O(\log\K(P))\\
t_{\gamma,P}&\lel +\I(\langle \gamma\rangle:\ch)+O(\log\K(P)).
\end{align*}
\qed
\end{proof}

Let $k$ be a physical address of $\gamma$. $\ch$ can be described by a small mathematical statement. By Theorem \ref{thr} and \textbf{IP}, there is a small constant $c$ where
\begin{align*}
t_{\tau,\gamma}&\lel \I(\langle\gamma\rangle:\ch)+O(\log \K(P))\lel k+ c +O(\log \K(P)).
\end{align*}
It's hard to find observations with small anomalies and impossible to find observations with no anomalies.
\section{Discussion}
One avenue for future research is the relationship of outliers with different areas of physics. In thermodynamics, oscillations of thermodynamic entropy have been shown to occur \cite{EpsteinThermo23}. One area of study is into presence of outliers in quantum information theory. Recently, a Quantum EL theorem has been proven \cite{EpsteinQEL23}. Can this theorem be extended (as the EL Theorem was extended to the Outlier Theorem) to a statement saying streams of quantum qubits will contain outlying states?

\end{document}